\newcommand{\PH}[1]{PH(#1)}
\newtheorem{definition}{Definition}
\newtheorem{theorem}{Theorem}
\newtheorem{lemma}{Lemma}
\newenvironment{proof}{\noindent{\bf Proof:}}{\hfill$\Box$}
\title{On-line construction of position heaps\footnote{A preliminary version of this paper has been presented to the 18th International Symposium on String Processing and Information Retrieval (SPIRE), Pisa (Italy), August 2011}}
\author{Gregory Kucherov\thanks{Universit\'e Paris-Est \& CNRS,
    Laboratoire d'Informatique Gaspard Monge, Marne-la-Vall\'ee,
    France, {\tt Gregory.Kucherov@univ-mlv.fr}} \thanks{Department of
    Computer Science, Ben-Gurion University of the Negev, Be'er Sheva, Israel}}
\date{\empty}
\begin{document}

\maketitle

\begin{abstract}
We propose a simple linear-time on-line algorithm for constructing a
position heap for a string \cite{Ehrenfeucht2011100}. Our definition
of position heap differs slightly from the one proposed in
\cite{Ehrenfeucht2011100} in that it considers the suffixes ordered 
in the descending order of length. Our construction is based on classic suffix pointers
and resembles Ukkonen's algorithm for suffix trees
\cite{Ukkonen93}. Using suffix pointers, the position
heap can be extended into the augmented position heap that allows for
a linear-time string matching algorithm \cite{Ehrenfeucht2011100}. 
\end{abstract}

\section{Introduction}
The theory of string algorithms developed beautiful data structures
for string matching and text indexing. Among them, {\em suffix tree}
and {\em suffix array} are most widely used structures, providing 
efficient solutions for a wide range of applications
\cite{CrochemoreRytter94,books/cu/Gusfield1997}. The DAWG ({\em Directed Acyclic Word
  Graph}) \cite{BlumerAtAlTCS85}, also known as {\em suffix automaton} \cite{CrochemoreTCS86}, is another elegant
structure that can be used both as a text index \cite{BlumerAtAlTCS85} or as a
matching automaton \cite{CrochemoreMFCS88,CrochemoreRytter94}. 

Recently, a new {\em position heap} data structure was proposed
\cite{Ehrenfeucht2011100}. 
Similar to the suffix tree, DAWG or suffix array, position heap allows for a
pre-processing of a text string in order to efficiently search for patterns in
it. As for the above-mentioned data structures, a position heap for a
string of length $n$ can be constructed in time $O(n)$. Then all
locations of a 
pattern of length $m$ can be found in 
time $O(m+occ)$, where $occ$ is the number of occurrences.  

The construction algorithm of \cite{Ehrenfeucht2011100} processes the
string from right to left, like Weiner's algorithm does for suffix
trees \cite{Weiner73}. Moreover, the construction requires a
so-called dual heap, which is an additional trie on the same
set of nodes. The position heap and its dual heap are constructed
simultaneously. 

To obtain a linear-time pattern matching algorithm of
\cite{Ehrenfeucht2011100}, the position heap should be post-processed
in order to add some additional information, resulting in the {\em
  augmented position heap}. The most important element of this information
includes so-called {\em maximal-reach pointers} assigned to certain
nodes. Computing these pointers makes use of the dual heap too. 

In this paper, we propose a different construction of the position
heap. First, we change the definition of the position heap by
reversing the order of suffixes and thus allowing for the left-to-right
traversal of the input string. The modified definition, however,
preserves good properties of the position heap and does not affect the string
matching algorithm proposed in \cite{Ehrenfeucht2011100}. For this
modified definition, we propose an {\em on-line} algorithm for
constructing the position heap. Our algorithm does not use the dual
heap, replacing it by classic {\em suffix pointers} used for
constructing suffix trees by Ukkonen's algorithm \cite{Ukkonen93}
or for constructing the DAWG \cite{BlumerAtAlTCS85}. Our algorithm is
simple and can be compared to Ukkonen's algorithm for suffix
trees, as opposed to Weiner's algorithm that constructs the suffix tree
by inserting suffixes right-to-left (i.e. shortest first). We deliberately use some
terminology of Ukkonen's algorithm to underline this similarity. 

We further show that the augmented position heap can be easily
constructed using suffix pointers. Thus, we completely eliminate the
use of the dual heap, replacing it by suffix pointers for constructing
both the position heap and its augmented version. Even if this
replacement does not provide an immediate improvement in space or
running time, we believe that our construction is conceptually simpler
and more natural.

\medskip
Throughout the paper, we assume we are given a constant-size alphabet $A$. Positions of
strings over $A$ are numbered from 1, that is,  a string $w$ of
\emph{length} $k$ is $w[1] \ldots w[k]$. 
The length~$k$ of $w$ is denoted by $|w|$. 
$w[i..j]$ denotes substring $w[i] \ldots w[j]$.

A {\em trie} (term attributed to Fredkin \cite{Fredkin60}) is a
simple natural data structure for storing a set of strings. It is a
tree with edges labeled by alphabet letters, such that for any
internal node, the edges leading to the children nodes are labeled by
distinct letters. In this paper, we assume the edges to be directed
towards leaves, and call an edge labeled by a letter $a$ an $a$-edge. 
A {\em label} of a node ({\em path label}) is the string formed by the
letters labeling the edges of the path from the root to this
node. Given a trie, a string $w$ is said to be represented in the trie
if it is a path label of some node. The corresponding node will then
be denoted by $\overline{w}$. 

\section{Definition of position heap}

To define position heaps, we first need to introduce
the {\em sequence hash tree} proposed by Coffman and Eve back in 1970
\cite{CoffmanEve70} as a data structure for implementing hash
tables. Assume we are given an ordered set of strings
$W=\{w_1,\ldots,w_n\}$ and assume for now that no $w_i$ is a prefix of
$w_j$ for any $j<i$. The sequence hash tree for $W$, denoted $SHT(W)$, is a trie
defined by the following iterative construction. We start with
the tree $SHT_0(W)$ consisting 
of a single root node $\mathit{root}$\footnote{This
  definition agrees with the definition of \cite{CoffmanEve70} but is slightly
  different from that of
  \cite{Ehrenfeucht2011100} which defines the root to store $w_1$. The
  difference is insignificant, however.}. We then
construct $SHT(W)$ by processing strings 
$w_1,\ldots,w_k$ in this order and for each
$w_i$, adding one node to the tree. By induction, assume that $SHT_i(W)$
is the sequence hash tree for 
$\{w_1,\ldots,w_i\}$. To construct $SHT_{i+1}(W)$, we find the shortest
prefix $v$ of $w_{i+1}$ which is {\em not} represented in
$SHT_i(W)$. Note that by our assumption, such a prefix always
exists. Let $v=v'a$, $a\in A$, i.e. $v'$ is the longest prefix of 
$w_{i+1}$ represented in $SHT_i(W)$. Then $SHT_{i+1}(W)$ is obtained from
$SHT_i(W)$ by adding a new node as a child of $v'$ connected to $v'$ by an 
$a$-edge and pointing to $w_{i+1}$. After inserting all strings of
$W$, we obtain $SHT(W)$, that is $SHT(W)=SHT_k(W)$. Thus, $SHT(W)$ is a trie of $n+1$ nodes such that a node pointing to $w_i$ is
labeled by some prefix of $w_i$. Note that the size of the sequence
hash tree depends only on the number of strings in the set and does
not depend on the length of those. An example of sequence hash
tree is given on Figure~\ref{SHT-figure}. 

\begin{figure}
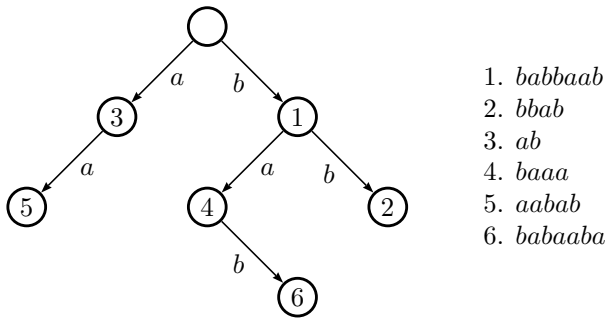

\begin{center}
\VCDraw{%
\begin{VCPicture}{(0,0)(10,6)}
\MediumState
\State{(4,6)}{root}
\State[1]{(6,4)}{1}
\State[2]{(8,2)}{2}
\State[3]{(2,4)}{3}
\State[4]{(4,2)}{4}
\State[5]{(0,2)}{5}
\State[6]{(6,0)}{6}
\EdgeL{root}{3}{a}
\EdgeL{3}{5}{a}
\EdgeR{root}{1}{b}
\EdgeL{1}{4}{a}
\EdgeR{1}{2}{b}
\EdgeR{4}{6}{b}
\end{VCPicture}
}
\begin{minipage}{3cm}
1. $babbaab$\\
2. $bbab$\\
3. $ab$\\
4. $baaa$\\
5. $aabab$\\
6. $babaaba$
\end{minipage}
\end{center}
\caption{Sequence hash tree for the set of strings shown on the
  right. Each node stores the rank of the corresponding string in the
set.}
\label{SHT-figure}
\end{figure}

We now define the {\em position heap} of a
string $T$. In \cite{Ehrenfeucht2011100}, the position heap for $T$ is
defined as the sequence hash tree for the set of suffixes of $T$,
where {\em the suffixes are ordered in the ascending order of
  length}, i.e. from right to left. This insures, in particular, the condition that no suffix is a prefix of
a previously inserted suffix, and then no suffix is already represented in the
position heap at the time of its insertion. 


In this paper, we define the position heap of $T$ to be the sequence
hash tree for the set of suffixes of $T$, 
where {\em the suffixes are ordered in the descending order of
  length}, i.e. from left to right. From now on, we stick to this order. 
An immediate observation is that the assumption of the suffix hash
tree does not hold anymore, and it may occur that an inserted suffix
is already represented in the position heap by an existing node. One easy way to cope with
this is to systematically assume that $T$ is ended by a special
sentinel symbol $\$$, like it is generally assumed for the suffix tree. 

On the other hand, as we will be interested in an on-line construction
of the position heap, we will still need to construct the position heap for
strings without the ending sentinel symbol. For that, we have to
slightly change the definition of sequence hash tree of a set $W$, by
allowing one node to point to several strings of $W$. The definition
of the position heap extends then to any string, with the only
difference that inserting a suffix may no longer lead to the creation
of a new node, but to adding a pointer to this suffix to an existing node. 
This feature,
however, will be used in a very restricted way, as the following
observation shows. 

\begin{lemma}
\label{one-or-two}
Let $W$ be a set of distinct strings. Then every node of $SHT(W)$
points to at most two strings of $W$. 
\end{lemma}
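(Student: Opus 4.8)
The plan is to track, for each node, the two distinct mechanisms by which it can come to point to a string, and then bound each mechanism separately. When inserting $w_i$ into the current tree, exactly one of two things happens. Either $w_i$ has a shortest prefix $v=v'a$ that is not yet represented, in which case a fresh node $\overline{v}$ is created as an $a$-child of $\overline{v'}$ and is made to point to $w_i$; or every prefix of $w_i$ (including $w_i$ itself) is already represented, in which case no node is created and instead a pointer to $w_i$ is attached to the existing node $\overline{w_i}$, whose label is exactly $w_i$. I would call the first kind a \emph{creation pointer} and the second a \emph{re-use pointer}, observing that each of the $|W|$ insertions deposits exactly one pointer, of one of these two kinds.

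First I would bound the creation pointers. Every node other than the root is created by exactly one insertion, namely the one that introduces its label as a shortest unrepresented prefix; the root is created before any insertion and receives no creation pointer. Hence each node receives at most one creation pointer.

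Next I would bound the re-use pointers, which is the heart of the argument. A re-use pointer lands on a node $u$ precisely when the inserted string $w_i$ is already fully represented and $w_i$ equals the label of $u$. Thus every re-use pointer directed at $u$ comes from a string equal to $u$'s label, and since the strings of $W$ are pairwise distinct, at most one of them can equal that label. Therefore $u$ receives at most one re-use pointer.

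Combining the two bounds yields at most one creation pointer plus at most one re-use pointer per node, i.e. at most two strings per node, which is the claim. The only step requiring care is the re-use count, and the key observation there is that a re-use pointer necessarily targets the node whose label is the \emph{entire} inserted string, so distinctness of $W$ immediately caps such pointers at one. I expect no further obstacle; it is also worth noting that the bound of two is tight, since a node created by some $w_i$ can later receive a re-use pointer from a proper prefix of $w_i$ that itself belongs to $W$.
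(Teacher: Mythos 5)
Your proof is correct and follows essentially the same route as the paper's: the paper likewise observes that a pointer is added to an existing node only when the inserted string is already represented (hence equals that node's label), so distinctness of $W$ caps this at one extra pointer per node. Your version merely makes explicit the split into creation pointers and re-use pointers, which the paper leaves implicit.
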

\begin{proof}
The only situation when a new pointer gets inserted to an existing
node is when the inserted string $w_{i+1}$ is already represented in
$SHT_i(W)$. Since all strings of $W$ are distinct, this situation may
occur only once for each node. Therefore, each node of $SHT(W)$ points
to one or two strings of $W$. 
\end{proof}

As a consequence of Lemma~\ref{one-or-two}, a position heap contains
two types of 
nodes, pointing respectively to one and two suffixes of $T$. The
former will be called {\em regular nodes} and the latter {\em double
  nodes}. 
We naturally assume that a pointer to a suffix is simply the
starting position of that suffix, therefore regular and double nodes
store one and two string positions respectively. Hereafter we
interchangeably refer to ``suffixes'' and ``positions'' when the
underlying string is unambiguously defined. 

Figure~\ref{PH-example} provides an example of a position heap. 
\begin{figure}
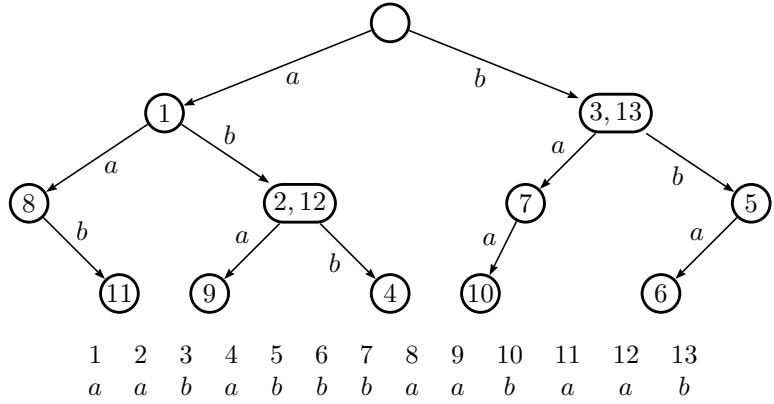

\begin{center}
\VCDraw{%
\begin{VCPicture}{(0,-1)(16,8)}
\MediumState
\State{(8,6)}{root} 
\State[1]{(3,4)}{1} 
\StateVar[2,\textrm{12}]{(6,2)}{2}
\StateVar[3,\textrm{13}]{(13,4)}{3}
\State[4]{(8,0)}{4}
\State[5]{(16,2)}{5}
\State[6]{(14,0)}{6}
\State[7]{(11,2)}{7}
\State[8]{(0,2)}{8}
\State[9]{(4,0)}{9}
\State[10]{(10,0)}{10}
\State[11]{(2,0)}{11}
\EdgeL{root}{1}{a} 
\EdgeL{1}{8}{a} 
\EdgeL{8}{11}{b} 
\EdgeL{1}{2}{b} 
\EdgeR{2}{9}{a} 
\EdgeR{2}{4}{b} 
\EdgeR{root}{3}{b} 
\EdgeR{3}{7}{a} 
\EdgeR{7}{10}{a} 
\EdgeR{3}{5}{b} 
\EdgeR{5}{6}{a} 
\end{VCPicture}
}

\begin{tabular}{ccccccccccccc}
1&2&3&4&5&6&7&8&9&10&11&12&13\\
$a$&$a$&$b$&$a$&$b$&$b$&$b$&$a$&$a$&$b$&$a$&$a$&$b$
\end{tabular}
\end{center}
\caption{Position heap for string $aababbbaabaab$. Double nodes store
  pairs of positions.}
\label{PH-example}
\end{figure}

\section{Properties of position heap}
\label{sect-properties}

Denote by $\PH{T}$ the position heap for a string $T[1..n]$
as defined in the previous section. In the following
theorem, we summarize some key properties of the position
heap. 

\begin{theorem}[\cite{Ehrenfeucht2011100}]
\label{PH-properties}
Consider $\PH{T[1..n]}$. The following properties hold.
\begin{itemize}
\item[\textit{(i)}] A substring $w$ of $T$ is represented in $\PH{T}$ iff
  $T$ contains occurrences of strings $w[1..1]$, $w[1..2]$,
  $w[1..3]$, \ldots, $w[1..|w|]$, appearing at increasing
  positions in this order. 
\item[\textit{(ii)}] The labels of all nodes of $\PH{T}$ form a factorial set. That
  is, if a string is represented in $\PH{T}$, all its substrings are
  represented too.
\item[\textit{(iii)}] The depth of $\PH{T}$ is no more than $2h(T)$, where $h(T)$ is the length of the longest
substring $w$ of $T$ which occurs $|w|$ times in $T$ (possibly with 
overlap). 
\item[\textit{(iv)}] If a substring $w$ occurs in $T$ at least $|w|$ times, then $w$ is
  represented in $\PH{T}$. Inversely, if $w$ is not represented in
  $\PH{T}$ and $w'$ is the longest prefix of $w$ which is represented,
  then $w$ cannot occur in $T$ more than $|w'|$ times. 
\end{itemize}
\end{theorem}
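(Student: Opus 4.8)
The plan is to treat property \textit{(i)} as the foundation and to derive \textit{(ii)}--\textit{(iv)} from it by short combinatorial arguments that use only \textit{(i)} together with the trie ancestor property (a node exists only if all of its ancestors do). Property \textit{(i)} is the statement that genuinely reflects the left-to-right, longest-first insertion order, so I would prove it first and with care. A direct induction on $|w|$ fails, however, because the ``increasing positions'' clause is really about the \emph{order} in which prefixes become representable, which the plain biconditional cannot see. I would therefore strengthen the claim so as to track insertion time. Writing $SHT_m$ for the tree obtained after inserting the suffixes starting at positions $1,\dots,m$ (exactly the $m$ longest suffixes), I would prove: for every $w$ with $|w|=k$ and every $m$, the node $\overline{w}$ is present in $SHT_m$ if and only if there exist $p_1<\cdots<p_k\le m$ with $T[p_j..p_j+j-1]=w[1..j]$ for all $j$. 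Part \textit{(i)} is then the case $m=n$.

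Both directions go by induction on $k$. For the ``if'' direction, given witnesses $p_1<\cdots<p_k\le m$, the induction hypothesis places $\overline{w[1..k-1]}$ in $SHT_{p_{k-1}}$, hence in $SHT_{p_k-1}$; at the step inserting the suffix $T[p_k..n]$ (whose length-$k$ prefix is exactly $w$, since $w$ occurs at $p_k$) all prefixes of $w$ of length $\le k-1$ are already present, so the shortest unrepresented prefix has length $\ge k$, and $\overline{w}$ is therefore either already present or created at this very step; in both cases $\overline{w}\in SHT_{p_k}\subseteq SHT_m$. For the ``only if'' direction I would take the step $i\le m$ at which $\overline{w}$ is first created: then $w=T[i..i+k-1]$, so I set $p_k=i$, and its parent $\overline{w[1..k-1]}$ already lies in $SHT_{i-1}$, so the induction hypothesis supplies $p_1<\cdots<p_{k-1}\le i-1<i=p_k$. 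The double-node feature of the modified definition only ever adds a pointer, never a node, so it affects neither direction.

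Granting \textit{(i)}, the remaining parts are quick. For \textit{(ii)}, beyond the automatic closure under prefixes (trie ancestors) it suffices to show closure under deleting the first letter: if $w$ has witnesses $p_1<\cdots<p_k$, then $w[2..k]$ admits the shifted witnesses $p_2+1<\cdots<p_k+1$, so it is represented by the ``if'' direction; since every substring is a prefix of a suffix, the label set is factorial. For \textit{(iii)}, if $w$ is represented with $|w|=d$, the prefix $u=w[1..\lceil d/2\rceil]$ occurs at the positions $p_{\lceil d/2\rceil},\dots,p_d$, that is at least $\lceil d/2\rceil=|u|$ times, whence $h(T)\ge\lceil d/2\rceil\ge d/2$ and $d\le 2h(T)$. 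For \textit{(iv)}, if $w$ occurs at least $|w|$ times, its $j$-th occurrence position serves as $p_j$, so \textit{(i)} makes $w$ representable; the inverse statement is the contrapositive, for if $w$ occurred more than $|w'|$ times then its prefix of length $|w'|+1$ would occur at least that many times, hence be represented, contradicting maximality of $w'$.

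The main obstacle is the strengthening of \textit{(i)}: identifying the correct time-indexed invariant and verifying that, at the decisive insertion of the suffix starting at $p_k$, the longest already-represented prefix is long enough to force the creation of $\overline{w}$ precisely at that step. Once this is in place, everything else is routine bookkeeping layered on top of \textit{(i)}.
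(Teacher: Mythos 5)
Your proposal is correct and follows essentially the same route as the paper: property \textit{(i)} is established first and then \textit{(ii)}--\textit{(iv)} are derived from it by exactly the arguments the paper uses (shifted witnesses for \textit{(ii)}, the $\lceil d/2\rceil$-prefix count for \textit{(iii)}, and the $j$-th occurrence as witness plus contrapositive for \textit{(iv)}). The only difference is one of rigor, not of method: your time-indexed induction on $SHT_m$ makes precise the ``if'' direction of \textit{(i)}, which the paper dismisses as immediate from the definition, and formalizes the ``only if'' direction that the paper argues via the creation order of nodes along a root-to-node path.
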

\begin{proof}
\textit{(i)} The 'if'-part follows immediately from the definition of
$\PH{T}$ and the left-to-right order of suffixes. 
If a substring $w$ is represented in $\PH{T}$, then nodes
$\overline{w[1..1]}$, $\overline{w[1..2]}$, $\overline{w[1..3]}$,
\ldots, $\overline{w[1..|w|]}$ have been created in this respective
order. The creation of each such node $\overline{w[1..\ell]}$ has
been triggered by an insertion of a suffix starting with
$w[1..\ell]$. Since suffixes are inserted from left to right,
property \textit{(i)} follows. 

Properties \textit{(ii)}-\textit{(iv)} have been established in \cite{Ehrenfeucht2011100} but
remain valid for our definition of position heap when suffixes are
inserted from left to right. Actually, these properties are valid for
{\em any} order of inserting suffixes into the position heap. 

\textit{(ii)} It is sufficient to show that if some string $w[1..\ell]$
is represented in $\PH{T}$, then both $w[i..\ell-1]$ and $w[2..\ell]$ are
represented too. For $w[1..\ell-1]$, this is obvious from
construction. For $w[2..\ell]$, this can be seen from
Property~\textit{(i)}. Indeed, if strings $w[1..1]$, $w[1..2]$,
  $w[1..3]$, \ldots, $w[1..\ell]$ appear in $T$ in this relative order,
then we have strings $w[2..2]$, $w[2..3]$, \ldots, $w[2..\ell]$
appearing in $T$ at increasing positions too. By Property~\textit{(i)},
this ensures that $w[2..\ell]$ is represented in $\PH{T}$. 

\textit{(iii)} Let $\overline{w}$ be one of the deepest
nodes of $\PH{T}$, i.e. the depth of $\PH{T}$ is $d=|w|$. From
Property~\textit{(i)}, strings $w[1..\lceil d/2\rceil]$, $w[1..\lceil
  d/2\rceil+1]$, \ldots, $w[1..d]$ occur at $T$ at distinct positions,
and therefore 
$w[1..\lceil d/2\rceil]$ 
occurs at least $\lceil d/2\rceil$ times in $T$. Then $h(T)\geq \lceil
d/2\rceil$ and the depth $d$ of $\PH{T}$ is bounded by $2h(T)$. 

\textit{(iv)} If a substring $w$ occurs in $T$ at least $|w|$ times,
then there exist successive occurrences of $w[1..1]$, $w[1..2]$,
$w[1..3]$,\ldots,$w[1..|w|]$, and, by Property~\textit{(i)}, $w$ is
  represented in $\PH{T}$. Assume now that $w$ is not represented in
  $\PH{T}$ and $w'$ is the longest prefix of $w$ which is
  represented. Assume further that $a$ is the letter that follows
  prefix $w'$ in $w$. Observe that $w'a$ occurs at most $|w'|$ times,
  as the contrary would mean that $w'a$ is represented too, which
  contradicts the choice of $w'$. Therefore, $w'a$ occurs at most
  $|w'|$ times and so does $w$. 
\end{proof}

Properties \textit{(iii)} and \textit{(iv)} show that the position heap of a
string ``adapts'' to the frequencies of its substrings. In particular, if
a string is ``frequent'' (occurs as many times as it is long), then it
is necessarily represented in the position heap. On the other hand, if
it is not represented, it has less occurrences than its length. The
latter property is crucial for obtaining a linear-time string matching
algorithm of \cite{Ehrenfeucht2011100}. 

\section{On-line construction algorithm}
\label{sect-algo}

Let us have a closer look
at the properties of double nodes of a position heap $\PH{T}$. Each
such node stores two positions
$i,j$ of $T$. Assume $i<j$, then positions $i$ and $j$ will be called the {\em
  primary} and the {\em secondary} positions respectively. 

\begin{lemma}
\label{next-secondary}
Let $T=T[1..n]$. If $j<n$ is the secondary position of some node of 
$\PH{T}$,  
then so is $j+1$.
\end{lemma}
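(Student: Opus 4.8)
The plan is to reduce the statement to a purely combinatorial condition on occurrences of prefixes. Write $SHT_k$ for the heap obtained after inserting the $k$ longest suffixes $T[1..n],\ldots,T[k..n]$, so that $\PH{T}=SHT_n$. The first step is to record the exact meaning of ``secondary'' in terms of the on-line construction: by Lemma~\ref{one-or-two} the only way a second pointer is ever attached to a node is when the suffix being inserted is already fully represented, so a position $p$ is secondary iff the suffix $T[p..n]$ is already represented in $SHT_{p-1}$ at the moment it is inserted. With this, the lemma becomes the implication ``$T[j..n]$ represented in $SHT_{j-1}$ $\Rightarrow$ $T[j+1..n]$ represented in $SHT_j$''.

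Next I would establish the on-line analogue of Property~\textit{(i)} of Theorem~\ref{PH-properties}: a string $w$ is represented in $SHT_k$ iff there are positions $p_1<p_2<\cdots<p_{|w|}\le k$ with $T[p_\ell..p_\ell+\ell-1]=w[1..\ell]$ for every $\ell$. This is proved by the very argument given for Property~\textit{(i)}, restricted to the first $k$ insertions: the node $\overline{w[1..\ell]}$ is created by inserting a suffix starting with $w[1..\ell]$, and its creation requires $\overline{w[1..\ell-1]}$ to exist already, which forces the triggering positions to increase strictly and to lie within $\{1,\ldots,k\}$. The converse (occurrences at increasing positions $\le k$ force representation) follows by induction on $\ell$ over the insertion order.

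The core is then a one-position shift. Applying the characterization to $w=T[j..n]$, of length $m=n-j+1$, with $k=j-1$, yields positions $p_1<\cdots<p_m\le j-1$ realizing the prefixes $T[j..j+\ell-1]$. For the shorter suffix $T[j+1..n]$ I set $q_\ell=p_{\ell+1}+1$ for $\ell=1,\ldots,m-1$. Since $T[p_{\ell+1}..p_{\ell+1}+\ell]=T[j..j+\ell]$, deleting the first letter gives $T[q_\ell..q_\ell+\ell-1]=T[j+1..j+\ell]$, which is exactly the $\ell$-th prefix of $T[j+1..n]$. The $q_\ell$ inherit strict monotonicity from the $p_\ell$, and the decisive bound is $q_{m-1}=p_m+1\le(j-1)+1=j$. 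Invoking the characterization again, now with $k=j$, shows $T[j+1..n]$ is represented in $SHT_j$, so $j+1$ is secondary.

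I expect the only delicate point to be precisely this bound $q_{m-1}\le j$. A naive shift would push the largest triggering position to $j+1$ and only prove representation in $SHT_{j+1}$, which is too weak to conclude that insertion of $T[j+1..n]$ finds it already present. It is exactly the hypothesis that $j$ is secondary --- equivalently $p_m\le j-1$ rather than merely $p_m\le j$ --- that supplies the extra unit of slack and keeps the shifted occurrences inside the first $j$ insertions. In the write-up I would state this inequality explicitly, together with the implicit bound $q_\ell+\ell-1\le n$ (inherited from $p_{\ell+1}+\ell\le n$) needed for the occurrences to fit inside $T$.
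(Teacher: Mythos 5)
Your proof is correct, and it is in fact tighter than the paper's own two-line argument. The paper reasons: since $j$ is secondary, node $\overline{T[j..n]}$ already exists when suffix $T[j..n]$ is inserted; by the factorial property (Theorem~\ref{PH-properties}\textit{(ii)}) node $\overline{T[j+1..n]}$ ``exists too''; \emph{a fortiori} it exists when $T[j+1..n]$ is inserted, so $j+1$ is secondary. What this elides is that Theorem~\ref{PH-properties}\textit{(ii)} is a statement about the \emph{complete} heap $\PH{T}$, whereas the lemma needs representation in an intermediate tree: intermediate trees are not factorial in general (for $T=aab$, after two insertions the tree represents $ab$ but not $b$), and factoriality of the final heap only says $\overline{T[j+1..n]}$ exists at the end, which by itself does not exclude that this node is created precisely by the insertion of suffix $j+1$ --- in which case $j+1$ would be primary. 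Your route closes exactly this gap: you prove a time-stamped refinement of Property~\textit{(i)} (representation in $SHT_k$ is equivalent to increasing occurrence positions $p_1<\dots<p_{|w|}\le k$ of the prefixes), and the shifted witnesses $q_\ell=p_{\ell+1}+1$ then satisfy $q_{m-1}=p_m+1\le j$ precisely because the hypothesis gives $p_m\le j-1$. This is the same occurrence-shifting idea by which the paper derives \textit{(ii)} from \textit{(i)}, but carried out with the insertion-time bounds that the paper's ``\emph{a fortiori}'' suppresses; you correctly identify that extra unit of slack as the crux. What the appeal to \textit{(ii)} buys the paper is brevity; what your version buys is an argument valid at the intermediate stages of the construction, where the statement actually lives. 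A further small observation that would round off your write-up: since inserting a suffix at a secondary position creates no new node, $SHT_{j-1}$ and $SHT_j$ have the same node set, so your conclusion is equivalent to $T[j+1..n]$ being represented already in $SHT_{j-1}$, i.e.\ to factoriality holding at that particular intermediate stage.
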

\begin{proof}
Consider $\PH{T}$ for some string $T[1..n]$. Assume $i,j$, $i<j$, are
respectively primary and secondary positions of some node. This means
that by the time the suffix $T[j..n]$ is inserted into $\PH{T}$ during its
construction, node $\overline{T[j..n]}$ already exists. By 
Theorem~\ref{PH-properties}\textit{(ii)}, node $\overline{T[j+1..n]}$
exists too. {\em A fortiori}, 
node $\overline{T[j+1..n]}$ exists when $T[j+1..n]$ is inserted into
$\PH{T}$. Therefore, $j+1$ becomes the
secondary position of that node after the insertion of suffix $T[j+1..n]$. 
\end{proof}

Lemma~\ref{next-secondary} implies that all positions of $T[1..n]$ are
split into two intervals: primary positions $[1..s-1]$, for some
position $s$, and secondary positions $[s..n]$. Position $s$ will be
called {\em active secondary position}, or {\em active position} for
short. 

\medskip
Assume we have constructed the position heap $\PH{T[1..k]}$ for some
prefix $T[1..k]$ of the input string $T[1..n]$. Let us analyze the
differences between $\PH{T[1..k]}$ and $\PH{T[1..k+1]}$ and the 
modifications that need to be made to transform the former into the
latter. 

Let 
$s$ be the active position of $T[1..k]$. First
observe that for suffixes $1,\ldots,s-1$, no changes need to be
made. Inserting each suffix $T[i..k]$ for $1\leq i\leq s-1$ into
$\PH{T[1..k]}$ led to the creation of a new node. This means that by
the time this suffix was inserted into $\PH{T[1..k]}$, some
prefix $T[i..\ell]$ of $T[i..k]$, $\ell\leq k$, was not represented in the position heap,
which led to the creation of a new node $\overline{T[i..\ell]}$
with the minimal such $\ell$. 
This shows that inserting suffixes $1,\ldots,s-1$ involve completely identical
steps in the construction of both $\PH{T[1..k]}$ and
$\PH{T[1..k+1]}$. 

\begin{figure}
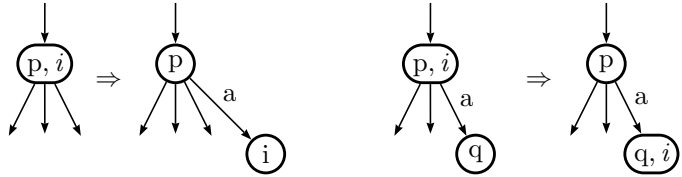

\begin{center}
\VCDraw{%
\begin{VCPicture}{(-1,-2)(1,1)}
\StateVar[$p$,\textit{i}]{(0,0)}{secondary}
\Initial[n]{secondary}
\HideState
\State{(-1,-2)}{left}\State{(0,-2)}{middle} \State{(1,-2)}{right}
\ShowState
\EdgeR{secondary}{left}{} 
\EdgeR{secondary}{middle}{} 
\EdgeL{secondary}{right}{} 
\end{VCPicture}
}
$\Rightarrow$
\VCDraw{%
\begin{VCPicture}{(-1,-2)(1,1)}
\State[$p$]{(0,0)}{secondary}
\Initial[n]{secondary}
\HideState
\State{(-1,-2)}{left}\State{(0,-2)}{middle} \State{(1,-2)}{right}
\ShowState
\State[$i$]{(2,-2)}{new}
\EdgeR{secondary}{left}{} 
\EdgeR{secondary}{middle}{} 
\EdgeL{secondary}{right}{} 
\EdgeL{secondary}{new}{$a$}
\end{VCPicture}
}
\hspace*{2cm}
\VCDraw{%
\begin{VCPicture}{(-1,-2)(2,1)}
\StateVar[$p$,\textit{i}]{(0,0)}{secondary}
\Initial[n]{secondary}
\HideState
\State{(-1,-2)}{left}\State{(0,-2)}{middle} 
\ShowState
\State[$q$]{(1,-2)}{right}
\EdgeR{secondary}{left}{} 
\EdgeR{secondary}{middle}{} 
\EdgeL{secondary}{right}{$a$} 
\end{VCPicture}
}
$\Rightarrow$
\VCDraw{%
\begin{VCPicture}{(-1,-2)(2,1)}
\State[$p$]{(0,0)}{secondary}
\Initial[n]{secondary}
\HideState
\State{(-1,-2)}{left}\State{(0,-2)}{middle} 
\ShowState
\StateVar[$q$,\textit{i}]{(1,-2)}{right}
\EdgeR{secondary}{left}{} 
\EdgeR{secondary}{middle}{} 
\EdgeL{secondary}{right}{$a$} 
\end{VCPicture}
}
\end{center}
\caption{Updating secondary position $i$ when transforming $\PH{T[1..k]}$ into $\PH{T[1..k+1]}$: {\em first case} (left) and {\em second case} (right)}
\label{cases12}
\end{figure}

The situation is different for the secondary positions
$s,\ldots,k$. Each suffix $T[i..k]$ for $s\leq i\leq k$ was already
represented in $\PH{T[1..k]}$ at the moment of its insertion, and then
resulted in the addition of the secondary position $i$ to the
node $\overline{T[i..k]}$. When inserting the corresponding suffix
$T[i..k+1]$ into the position heap 
$\PH{T[1..k+1]}$, two cases arise. 
In the {\em first case}, 
inserting the suffix $T[i..k+1]$ leads to the
creation of the new node $\overline{T[i..k+1]}$ if this node does not
exist yet. Position $i$ then becomes the primary position of this
new node. 
Observe that this only occurs when $\PH{T[1..k]}$ does not
contain an $T[k+1]$-edge outgoing from the node
$\overline{T[i..k]}$. It is easily seen that such an edge cannot
appear by the time of insertion of $T[i..k+1]$ into $\PH{T[1..k+1]}$
if it was not already present in $\PH{T[1..k]}$. In the {\em second
  case}, node $T[i..k]$ has an outgoing $T[k+1]$-edge in
$\PH{T[1..k]}$, and in the construction of $\PH{T[1..k+1]}$, the
secondary position $i$ stored in this node should be ``moved'' to the
child node $\overline{T[i..k+1]}$. It becomes then the secondary
position of this node. The two cases are illustrated in Figure~\ref{cases12}. 

Observe now that if for a secondary position $i$, the corresponding node
$\overline{T[i..k]}$ has 
an outgoing $T[k+1]$-edge, then so does the node
$\overline{T[i+1..k]}$ storing the secondary position $i+1$. This can
again be seen from the factorial property of the position heap
(Theorem~\ref{PH-properties}\textit{(ii)}). This shows that the above
two cases split the 
interval of secondary positions $[s..k]$ into two
subintervals $[s..t-1]$ and $[t..k]$, such that node
$\overline{T[i..k]}$ does not have an outgoing $T[k+1]$-edge for 
$i\in [s..t-1]$ and does have such an edge for $i\in [t..k]$. 

The above discussion is summarized in the following lemma specifying
the changes that have to be made to 
transform $\PH{T[1..k]}$ into $\PH{T[1..k+1]}$. 

\begin{lemma}
\label{main}
Given $T[1..n]$, consider $\PH{T[1..k]}$ for $k<n$. Let $s$ be the
active secondary position, stored in the node
$\overline{T[s..k]}$. Let $t\geq s$ be the smallest position such that
node $\overline{T[t..k]}$ has an outgoing $T[k+1]$-transition. To
obtain $\PH{T[1..k+1]}$, $\PH{T[1..k]}$ should be modified in the
following way:
\begin{itemize}
\item[\textit{(i)}] for every node $\overline{T[i..k]}$, $s\leq i\leq
  t-1$, create a new child linked to $\overline{T[i..k]}$ by a
  $T[k+1]$-edge. Delete secondary position $i$ from the node
  $\overline{T[i..k]}$ and assign it as a primary position to the new
  node $\overline{T[i..k+1]}$,
\item[\textit{(ii)}] for every node $\overline{T[i..k]}$, $i\geq t$,
  move the secondary position $i$ from node $\overline{T[i..k]}$ to
  node $\overline{T[i..k+1]}$. 
\end{itemize}
\end{lemma}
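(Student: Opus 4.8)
The plan is to prove the lemma by running the constructions of $\PH{T[1..k]}$ and $\PH{T[1..k+1]}$ side by side, comparing what happens when the $i$-th suffix is inserted in each. Since $\PH{T[1..k+1]}$ is the sequence hash tree of the suffixes $T[1..k+1],\dots,T[k+1..k+1]$ inserted left to right, while $\PH{T[1..k]}$ is that of $T[1..k],\dots,T[k..k]$, the $i$-th suffix differs between the two constructions only in its tail beyond position $k$. First I would formalize the claim already argued in the text: for $1\le i\le s-1$ the insertion of suffix $i$ creates a new node $\overline{T[i..\ell]}$ at some minimal depth $\ell\le k$, and since the creation point sits at a prefix of length $\ell\le k<k+1$, extending the suffix to $T[i..k+1]$ does not change which node is created. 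By induction on $i$ the first $s-1$ insertions build identical partial tries in both constructions, so the primary positions $[1..s-1]$ require no modification.

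For the secondary positions I would next justify the split of $[s..k]$ into $[s..t-1]$ and $[t..k]$. The key tool is the factorial property (Theorem~\ref{PH-properties}\textit{(ii)}) applied to the position heap $\PH{T[1..k]}$: if $\overline{T[i..k]}$ carries an outgoing $T[k+1]$-edge, then $T[i..k+1]$ is represented, hence so is its suffix $T[i+1..k+1]$, which means $\overline{T[i+1..k]}$ carries a $T[k+1]$-edge as well. Thus the set of $i\in[s..k]$ for which $\overline{T[i..k]}$ has a $T[k+1]$-edge is closed under incrementing $i$, so it is an interval $[t..k]$ (with $t=k+1$ if it is empty), and its complement within $[s..k]$ is exactly $[s..t-1]$; this is the $t$ of the statement.

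With the split in hand the two cases follow. For $i\ge t$ (case \textit{(ii)}), the edge $\overline{T[i..k]}\to\overline{T[i..k+1]}$ is present, so when the suffix $T[i..k+1]$ is processed during the construction of $\PH{T[1..k+1]}$ its prefix $T[i..k+1]$ is already represented; no node is created and the position $i$ is simply recorded at $\overline{T[i..k+1]}$, i.e.\ the secondary position moves from parent to child. For $s\le i\le t-1$ (case \textit{(i)}), the node $\overline{T[i..k]}$ has no $T[k+1]$-edge, so inserting $T[i..k+1]$ produces the fresh node $\overline{T[i..k+1]}$ with $i$ as its primary position. The delicate point, which I expect to be the main obstacle, is to rule out that a $T[k+1]$-edge out of $\overline{T[i..k]}$ springs into existence during the construction of $\PH{T[1..k+1]}$ before the suffix $i$ is reached. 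I would argue this by noting that creating such an edge amounts to creating the node $\overline{T[i..k+1]}$, which, by the single-node-per-insertion rule of the sequence hash tree, can only happen when some earlier suffix $j<i$ descends exactly to $\overline{T[i..k]}$ and appends the $T[k+1]$-child; this requires an occurrence of the string $T[i..k+1]$ starting at $j<i$ and hence ending no later than position $k$, that is, lying entirely inside $T[1..k]$. Consequently the edge would already have been forced in $\PH{T[1..k]}$, contradicting $i\in[s..t-1]$.

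Finally, I would remark that the freshly appended suffix $T[k+1..k+1]$ is subsumed by the same analysis taken at $i=k+1$, where $\overline{T[k+1..k]}$ is the empty string and hence the root: it falls into case \textit{(ii)} when the root already has a $T[k+1]$-edge and into case \textit{(i)} otherwise, consistently extending the interval description above. This accounts for all suffixes $1,\dots,k+1$ and therefore describes exactly the transformation from $\PH{T[1..k]}$ to $\PH{T[1..k+1]}$.
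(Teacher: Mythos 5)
Your overall plan coincides with the paper's own argument (which appears as the discussion preceding the lemma rather than as a formal proof): compare the two constructions suffix by suffix, use Theorem~\ref{PH-properties}\textit{(ii)} to show that the positions whose nodes carry a $T[k+1]$-edge form a terminal interval $[t..k]$, and then treat the two cases. Your handling of the primary positions, of the interval split, and of the boundary suffix $T[k+1..k+1]$ is correct, and you rightly identify as the crux the claim the paper dismisses with ``it is easily seen'': that no $T[k+1]$-edge out of $\overline{T[i..k]}$, $i\in[s..t-1]$, can spring into existence during the construction of $\PH{T[1..k+1]}$ before suffix $i$ is inserted.

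However, your argument for that crux has a genuine gap at the word ``Consequently.'' From ``some earlier suffix $j<i$ creates the $T[k+1]$-child of $\overline{T[i..k]}$, hence $T[i..k+1]$ occurs at position $j$ and lies entirely inside $T[1..k]$'' you infer that ``the edge would already have been forced in $\PH{T[1..k]}$.'' Read literally, this uses the principle that a string occurring inside $T[1..k]$ must be represented in $\PH{T[1..k]}$, and that principle is false for position heaps: for example, $bbb$ occurs in $aababbbaabaab$ but is not represented in the heap of Figure~\ref{PH-example} (representability is governed by Theorem~\ref{PH-properties}\textit{(i)} and \textit{(iv)}, not by mere occurrence). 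To make the step valid you must split on $j$. If $j\le s-1$, then by your own first part the two constructions are in identical states when suffix $j$ is processed; since the occurrence of $T[i..k+1]$ at $j$ ends by position $k$, the string $T[i..k+1]$ is also a prefix of $T[j..k]$, so in the construction of $\PH{T[1..k]}$ suffix $j$ would descend to the same node $\overline{T[i..k]}$ and create the same child --- only now do you get the contradiction with $i<t$. If $s\le j<i$, the two partial trees are no longer identical and the occurrence argument proves nothing; instead invoke your induction hypothesis: such a suffix either creates no node (case \textit{(ii)}) or creates the node $\overline{T[j..k+1]}$, whose label has length $k+2-j\neq k+2-i$, so it cannot be $\overline{T[i..k+1]}$. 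With this case distinction the proof closes; indeed, once the induction is set up properly, the contradiction detour is unnecessary, since the nodes present when suffix $i$ is inserted are exactly the nodes of $\PH{T[1..k]}$ plus nodes labelled $T[j..k+1]$ for $j\in[s..i-1]\cap[s..t-1]$, none of which is labelled $T[i..k+1]$ --- the former by the definition of $t$, the latter by length.
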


\medskip
We describe now the algorithm implementing the changes specified by
Lemma~\ref{main}. 
We augment $\PH{T}$ with {\em suffix pointers} $f$ defined in the usual way:
\begin{definition}
\label{suf-link}
For each node $\overline{T[i..j]}$ of $\PH{T}$, a {\em suffix pointer} is
defined by \linebreak[4]$f(\overline{T[i..j]})=\overline{T[i+1..j]}$. 
\end{definition}
Note that the definition is sound, as the node $\overline{T[i+1..j]}$
exists whenever the node $\overline{T[i..j]}$ exists, according to
Theorem~\ref{PH-properties}\textit{(ii)}. For the root 
node, it will be convenient for us to define $f(\mathit{root})=\perp$, where
$\perp$ is a special node such that there is an $a$-edge between
$\perp$ and $\mathit{root}$ for every $a\in A$ (similar to Ukkonen's algorithm
\cite{Ukkonen93}). Figure~\ref{PH-with-suf-point} shows the
position heap of Figure~\ref{PH-example} supplemented by suffix pointers. 
\begin{figure}
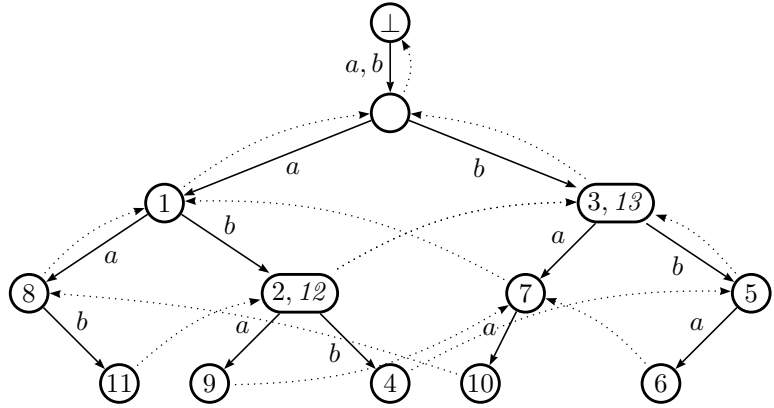

\begin{center}
\VCDraw{%
\begin{VCPicture}{(0,-1)(16,8)}
\MediumState
\State[\perp]{(8,8)}{bottom}
\State{(8,6)}{root} 
\State[1]{(3,4)}{1} 
\StateVar[2,\mathit{12}]{(6,2)}{2}
\StateVar[3,\mathit{13}]{(13,4)}{3}
\State[4]{(8,0)}{4}
\State[5]{(16,2)}{5}
\State[6]{(14,0)}{6}
\State[7]{(11,2)}{7}
\State[8]{(0,2)}{8}
\State[9]{(4,0)}{9}
\State[10]{(10,0)}{10}
\State[11]{(2,0)}{11}
\EdgeR{bottom}{root}{a,b}
\EdgeL{root}{1}{a} 
\EdgeL{1}{8}{a} 
\EdgeL{8}{11}{b} 
\EdgeL{1}{2}{b} 
\EdgeR{2}{9}{a} 
\EdgeR{2}{4}{b} 
\EdgeR{root}{3}{b} 
\EdgeR{3}{7}{a} 
\EdgeR{7}{10}{a} 
\EdgeR{3}{5}{b} 
\EdgeR{5}{6}{a} 
\ChgEdgeLineStyle{dotted}
\LArcR[]{root}{bottom}{}
\ArcL[]{1}{root}{}
\ArcL[]{8}{1}{}
\ArcL[]{11}{2}{}
\ArcL[]{2}{3}{}
\ArcL[]{2}{3}{}
\ArcR[]{9}{7}{}
\ArcL[]{4}{5}{}
\ArcR[]{3}{root}{}
\ArcR[]{7}{1}{}
\ArcR[]{5}{3}{}
\VArcR[]{arcangle=-7}{10}{8}{}
\ArcR[]{6}{7}{}
\ChgEdgeLineStyle{solid}
\end{VCPicture}
}
\end{center}
\caption{Position heap for string $aababbbaabaab$ with suffix pointers
  (dotted arrows). Secondary positions are shown in italic.}
\label{PH-with-suf-point}
\end{figure}

We now begin to describe the on-line construction algorithm for
$\PH{T}$, given a text $T[1..n]$. Consider the node
$\overline{T[s..k]}$ of $\PH{T[1..k]}$ storing the active secondary position
$s$, that we call the {\em active node}. If the active secondary
position does not exist (i.e. there is no secondary positions at all),
then the active node is $\mathit{root}$ and the active position is set to
$k+1$. 
Observe that the nodes storing the other secondary positions
$s+1,s+2,\ldots,n$ can be reached, in order, by following the chain of
suffix pointers 
$f(\overline{T[s..n]}),f(f(\overline{T[s..n]})),\ldots$ 
until the root node is reached.  On the example of
Figure~\ref{PH-with-suf-point}, the active secondary position is $\mathit{12}$,
and the chain of suffix pointer outgoing from the active node leads to the node
storing position $\mathit{13}$ followed by the root. 

This brings us to the main trick of our construction: {\em we will not
  store secondary positions at all, but only 
memorize the active secondary position and the active node}. The
secondary positions can be easily recovered by traversing the chain of
suffix pointers starting from the active node and incrementing the
position counter after traversing each edge. Note also that if the input
string $T$ is ended by a unique sentinel symbol, the resulting
position heap does not contain any secondary nodes and there is no
need to recover them. 


Keeping in mind that the secondary positions are not stored explicitly,
the transformation of $\PH{T[1..k]}$ into $\PH{T[1..k+1]}$ specified
by Lemma~\ref{main} reduces to processing case \textit{(i)} only, as
case \textit{(ii)} does not imply any modification anymore. Case
\textit{(i)} is implemented by the
following simple procedure. Starting from the active node, the
algorithm traverses the 
chain of suffix pointers as long as the current node does not have an
outgoing $T[k+1]$-edge. For each such node, a new node is created
linked by a $T[k+1]$-edge to the current node. A suffix pointer to
this new node is set from the previously created new node. Once the
first node with an outgoing $T[k+1]$-edge is encountered, the
algorithm moves to the node this edge leads to, sets the suffix
pointer to this node, and assigns this node to be the active node for
the following iteration. The correctness of the last assignment is
stated in the following lemma.

\begin{lemma}
\label{new-active-node}
Consider $\PH{T[1..k]}$ and let $s$ be the active position, and $t\geq
s$ be the smallest position such that node $\overline{T[t..k]}$ has an
outgoing $T[k+1]$-edge. Then node $\overline{T[t..k+1]}$ is the active
node of $\PH{T[1..k+1]}$. 
\end{lemma}
\begin{proof}
As it follows from Lemma~\ref{main}, $t$ is the largest secondary
position of $T[1..k+1]$. 
\end{proof}

Algorithm~\ref{main-algo} provides a pseudo-code of the algorithm.
\begin{algorithm}
\begin{algorithmic}[1]
\STATE create states $\mathit{root}$ and $\perp$ 
\STATE $f(\mathit{root}) \gets \perp$
\FORALL {$a\in A$} \STATE set an $a$-edge from $\perp$ to $\mathit{root}$ \ENDFOR
\STATE $\mathit{currentnode} \gets \mathit{root}$
\STATE $\mathit{currentsuffix} \gets 1$
\FOR{$i = 1 \mathbf{~to~} n$}
   \STATE $\mathit{lastcreatednode} \gets \mathit{undefined}$
   \WHILE {$\mathit{currentnode}$ does not have an outgoing $T[i]$-edge}
      \STATE create a new node $\mathit{newnode}$ pointing to
      $\mathit{currentsuffix}$ \label{newnode1} 
      \STATE set a $T[i]$-edge from $\mathit{currentnode}$ to $\mathit{newnode}$
      \IF {$\mathit{lastcreatednode}\neq \mathit{undefined}$} \STATE $f(\mathit{lastcreatednode})\gets \mathit{newnode}$ \ENDIF
      \STATE $\mathit{lastcreatednode}\gets \mathit{newnode}$
      \STATE $\mathit{currentnode}\gets f(\mathit{currentnode})$
      \STATE $\mathit{currentsuffix}\gets \mathit{currentsuffix}+1$
   \ENDWHILE
   \STATE move $\mathit{currentnode}$ to the target node of the
   outgoing $T[i]$-edge 
   \IF {$\mathit{lastcreatednode}\neq \mathit{undefined}$} \STATE
   $f(\mathit{lastcreatednode})\gets \mathit{currentnode}$ \ENDIF 
\ENDFOR
\end{algorithmic}
\caption{On-line construction of the position heap $\PH{T[1..n]}$}\label{main-algo}
\end{algorithm}

The correctness of Algorithm~\ref{main-algo} follows from
Lemmas~\ref{main}, \ref{new-active-node} and the discussion above. It
is instructive, in addition, to observe the following: 
\begin{itemize}
\item it is easily seen that the suffix pointers of $\PH{T[1..k+1]}$ are
  correctly set. Indeed, the algorithm assigns to
  $\overline{T[i..k+1]}$ a suffix pointer to
  $\overline{T[i+1..k+1]}$ which is obviously correct. Note that for
  the active position $s$ of $T[1..k]$, the created node
  $\overline{T[s..k+1]}$ does not get pointed to by any suffix pointer,
  which is correct, as $T[s-1..k+1]$ is not represented in
  $\PH{T[1..k+1]}$: the position $s-1$ is primary in $T[1..k]$ and
  therefore the node $\overline{T[s-1..k]}$, if it exists in
  $\PH{T[1..k]}$, does not get extended by a $T[k+1]$-edge (cf
  Lemma~\ref{main}). 
\item since the depth of $\overline{T[s..k]}$ ($s$ is the active
  position) in $\PH{T[1..k]}$ is $k+1-s$ and a traversal of a suffix link
  decrements the depth by $1$ and increments the current position by
  $1$, it follows that if the traversal of the suffix chain reaches
  the root node, the active position value becomes $k+1$, which is
  exactly what we need to start processing the next letter
  $T[k+1]$. 
This shows why Algorithm~\ref{main-algo} correctly
 maintains $\mathit{currentsuffix}$ and never needs to reset it at
the beginning of the \textbf{for}-loop iteration. 
\end{itemize}

It is easy to see that the running time of Algorithm~\ref{main-algo}
is linear in the length $n$ of the input string. Since each iteration of
the \textbf{while}-loop creates a node, this loop iterates exactly $n$
times over the whole run of the algorithm. Trivially, the
\textbf{for}-loop iterates $n$ times too, and all the involved
operations are constant time. Thus, the whole algorithm takes $O(n)$
time. The following theorem concludes the construction.

\begin{theorem}
\label{conclusion}
For an input string $T[1..n]$, Algorithm~\ref{main-algo} correctly
constructs $\PH{T}$ on-line in time $O(n)$. 
\end{theorem}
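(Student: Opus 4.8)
The plan is to establish correctness by induction on the length of the processed prefix, and then to bound the running time by an amortized node-counting argument; throughout I rely on the structural description of the update in Lemma~\ref{main} and on the identification of the new active node in Lemma~\ref{new-active-node}. The loop invariant I would carry is that, immediately after the $i$-th iteration of the \textbf{for}-loop, the trie and suffix pointers built so far coincide with $\PH{T[1..i]}$, with $\mathit{currentnode}$ equal to the active node $\overline{T[s..i]}$ and $\mathit{currentsuffix}$ equal to the active position $s$. The base case $i=0$ holds trivially: the structure consists of $\mathit{root}$ and $\perp$, the active node is $\mathit{root}$, and the active position is $1$. For the inductive step (processing $T[k+1]$ to pass from $\PH{T[1..k]}$ to $\PH{T[1..k+1]}$), Lemma~\ref{main} splits the required update into case~\textit{(i)} for positions $s,\ldots,t-1$ and case~\textit{(ii)} for positions $\geq t$. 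Because secondary positions are never stored explicitly, case~\textit{(ii)} demands no action, so it suffices to verify that the \textbf{while}-loop realizes case~\textit{(i)} exactly: starting at $\overline{T[s..k]}$ and following suffix pointers, the loop visits precisely the nodes $\overline{T[i..k]}$ lacking an outgoing $T[k+1]$-edge and attaches to each the child $\overline{T[i..k+1]}$ via a $T[k+1]$-edge, as prescribed. It stops at $\overline{T[t..k]}$, the first node carrying such an edge, and by Lemma~\ref{new-active-node} the target $\overline{T[t..k+1]}$ is the active node of $\PH{T[1..k+1]}$, which the algorithm assigns to $\mathit{currentnode}$. This restores the invariant, and the on-line property follows at once, since a valid heap for every prefix is available after the corresponding iteration.

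Two bookkeeping points remain, both already flagged in the discussion preceding the theorem. First, the suffix pointers are set correctly: the algorithm threads $f(\mathit{lastcreatednode})$ through the newly created nodes and finally to $\overline{T[t..k+1]}$, so that $\overline{T[i..k+1]}$ points to $\overline{T[i+1..k+1]}$ in agreement with Definition~\ref{suf-link}; the topmost node $\overline{T[s..k+1]}$ receives no incoming pointer, which is correct because $T[s-1..k+1]$ is not represented in $\PH{T[1..k+1]}$. Second, $\mathit{currentsuffix}$ needs no reset between \textbf{for}-iterations: the active node $\overline{T[s..k]}$ has depth $k+1-s$, and each traversal of a suffix pointer decreases the depth by one while increasing $\mathit{currentsuffix}$ by one, so that reaching $\mathit{root}$ (depth $0$) leaves $\mathit{currentsuffix}=k+1$, exactly the first suffix to be inserted when $T[k+1]$ is processed.

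For the running time I would argue as follows. Each iteration of the \textbf{while}-loop creates exactly one node, so the total number of \textbf{while}-loop iterations over the entire run equals the number of nodes created, which is at most $n$ (at most one per inserted suffix). Hence the \textbf{while}-loop contributes $O(n)$ work \emph{in aggregate}, not per \textbf{for}-iteration. The \textbf{for}-loop itself runs $n$ times, and since the alphabet $A$ has constant size, every primitive operation---testing for or following a $T[i]$-edge, creating a node, and updating a suffix pointer---takes $O(1)$ time; summing yields the $O(n)$ bound. I expect the only subtle point to be precisely this amortization: individual \textbf{for}-iterations can run the \textbf{while}-loop many times, so the linear bound is not immediate per iteration and must be obtained by the global telescoping argument that charges each \textbf{while}-step to a distinct created node, exactly as in Ukkonen's algorithm.
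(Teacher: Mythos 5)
Your proof is correct and follows essentially the same route as the paper: the paper likewise derives correctness from Lemma~\ref{main} and Lemma~\ref{new-active-node} together with the observation that case~\textit{(ii)} needs no action when secondary positions are implicit, makes the same two bookkeeping observations (suffix-pointer correctness and the depth argument for $\mathit{currentsuffix}$), and bounds the running time by the same amortized count of \textbf{while}-loop iterations against created nodes. Your explicit loop invariant and the correction that the node count is \emph{at most} $n$ (rather than exactly $n$, as the paper loosely states, since only primary positions create nodes) are minor presentational refinements, not a different argument.
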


\section{Augmented position heap}

Assume we have a text $T[1..n]$ for which we constructed the position
heap $\PH{T}$. We don't assume that $T$ is ended by a unique
letter, and therefore some nodes of $\PH{T}$ are double nodes and store 
two positions of $T$, one primary and one secondary. Here
we assume that the secondary positions {\em are} actually stored (or can be
retrieved in constant time for each node). As explained in Section~\ref{sect-algo},
even if the secondary positions are not stored during the construction
of $\PH{T}$, they can be easily recovered once the construction is
completed. 

\cite{Ehrenfeucht2011100} proposed a linear-time
string matching algorithm using $\PH{T[1..n]}$, i.e. an algorithm that
computes all occurrences of a pattern string in $T$ in time
$O(m+occ)$, where $m$ is the pattern length and $occ$ the number of
occurrences. Describing this elegant algorithm is beyond the scope of
this paper, we refer the reader to \cite{Ehrenfeucht2011100} for its
description. We only note that the algorithm itself applies without changes to
our definition of position heap, as it does not depend in any way on
the order that the suffixes of $T$ are inserted. 

However, the algorithm of \cite{Ehrenfeucht2011100} runs on $\PH{T}$
enriched with some additional information. Let $\overline{i}$ denote
the node of $\PH{T}$ storing position $i$, $1\leq i\leq n$. The
extended data structure, called the {\em augmented position heap},
should allow the following queries to be answered in constant time:
\begin{itemize}
\item given a position $i$, retrieve the node $\overline{i}$,
\item given two nodes $\overline{i}$ and $\overline{j}$, is 
$\overline{i}$ a (not necessarily immediate) ancestor of 
$\overline{j}$?
\item given a position $i$ of $T$, retrieve the node
  $\overline{T[i..i+\ell]}$, where $T[i..i+\ell]$ is the longest
  substring of $T$ starting at position $i$ and represented in
  $\PH{T}$. 
\end{itemize}

To answer the first query, \cite{Ehrenfeucht2011100} simply introduces
an auxiliary array storing, for each position $i$, a pointer to the node
$\overline{i}$. Maintaining this array during the construction of
$\PH{T}$ by Algorithm~\ref{main-algo} is trivial: once a position is
assigned to a newly created node (line~\ref{newnode1} 
of Algorithm~\ref{main-algo}), a new entry of the array is set. If
$T$ is not ended by a unique symbol and then the final $\PH{T}$ has
secondary positions, those are easily recovered by traversing the chain
of suffix pointers at the very end of the construction. 

The second query can be also easily answered in constant time after a
linear-time preprocessing of $\PH{T}$. A solution proposed in
\cite{Ehrenfeucht2011100} consists in traversing $\PH{T}$ depth-first
and storing, for each node, its discovery and finishing times
\cite{CLR}. Then node $\overline{i}$ is an ancestor of node
$\overline{j}$ if and only if the discovery and finishing time of
$\overline{i}$ is respectively smaller and greater than the discovery
and finishing time of $\overline{j}$. 

A more space-efficient solution would be to use a balanced parenthesis
representation of the tree topology of $\PH{T}$, taking $2n$ bits,
and link each 
node to the corresponding opening parenthesis. Then the corresponding
closing parenthesis can be retrieved in constant time by the method
of \cite{MunroRaman01} using $o(n)$ auxiliary bits. This allows
ancestor queries to be answered in constant time. 

The third type of queries is answered by an additional
mapping called {\em maximal-reach pointer}
\cite{Ehrenfeucht2011100}: for a position $i$ of $T[1..n]$, define
$mrp(i)$ to be the node $\overline{T[i..i+\ell]}$, where
$T[i..i+\ell]$ is the longest prefix of $T[i..n]$ 
represented in $\PH{T}$. Observe first that if $i$ is a secondary
position, then $mrp(i)=\overline{i}$. This is because 
a secondary position $i$ is stored in node $\overline{T[i..n]}$, which
trivially corresponds to the longest prefix starting at $i$. 
Therefore, as it is done in \cite{Ehrenfeucht2011100}, $mrp$ can be
represented by pointers from node $\overline{i}$ to node $mrp(i)$
whenever these nodes are different. In our case, we have then to keep
in mind that a maximal-reach pointer from a double node applies to the
primary position of this node. Figure~\ref{mrp-figure} provides an
illustration. 
\begin{figure}
\begin{center}
\VCDraw{%
\begin{VCPicture}{(0,-1)(16,8)}
\MediumState
\State{(8,6)}{root} 
\State[1]{(3,4)}{1} 
\StateVar[2,\textit{12}]{(6,2)}{2}
\StateVar[3,\textit{13}]{(13,4)}{3}
\State[4]{(8,0)}{4}
\State[5]{(16,2)}{5}
\State[6]{(14,0)}{6}
\State[7]{(11,2)}{7}
\State[8]{(0,2)}{8}
\State[9]{(4,0)}{9}
\State[10]{(10,0)}{10}
\State[11]{(2,0)}{11}
\EdgeL{root}{1}{a} 
\EdgeL{1}{8}{a} 
\EdgeL{8}{11}{b} 
\EdgeL{1}{2}{b} 
\EdgeR{2}{9}{a} 
\EdgeR{2}{4}{b} 
\EdgeR{root}{3}{b} 
\EdgeR{3}{7}{a} 
\EdgeR{7}{10}{a} 
\EdgeR{3}{5}{b} 
\EdgeR{5}{6}{a} 
\ChgEdgeLineStyle{dotted}
\ArcL[]{1}{root}{}
\ArcL[]{8}{1}{}
\ArcL[]{11}{2}{}
\ArcL[]{2}{3}{}
\ArcL[]{2}{3}{}
\ArcR[]{9}{7}{}
\ArcL[]{4}{5}{}
\ArcR[]{3}{root}{}
\ArcR[]{7}{1}{}
\ArcR[]{5}{3}{}
\VArcR[]{arcangle=-7}{10}{8}{}
\ArcR[]{6}{7}{}
\ChgEdgeLineStyle{solid}
\EdgeLineDouble
\FixEdgeLineDouble{1.0}{1.0}
\ArcR[]{1}{11}{}
\LArcR[]{8}{11}{}
\LArcL[]{2}{9}{}
\LArcL[]{3}{7}{}
\LArcL[]{7}{10}{}
\FixEdgeLineDouble{0.5}{0.6}
\EdgeLineSimple
\end{VCPicture}
}
\end{center}
\caption{Position heap for string $aababbbaabaab$ with suffix pointers
  and maximal-reach pointers $mrp$ (double arrows). 
Only values for which $mrp(i)\neq \overline{i}$ are shown, namely
$mrp(1)=\overline{11}$, $mrp(8)=\overline{11}$, $mrp(2)=\overline{9}$,
$mrp(3)=\overline{7}$, $mrp(7)=\overline{10}$. Note that maximal reach pointers outgoing from double nodes
are unambiguous as for all secondary positions $i$, we have
$mrp(i)=\overline{i}$.} 
\label{mrp-figure}
\end{figure}

In \cite{Ehrenfeucht2011100}, maximal-reach pointers are computed by
an extra traversal of $\PH{T}$, using an auxiliary {\em dual heap}
structure on top of it (see Introduction). Here we show that
maximal-reach pointers can be easily computed using suffix pointers
instead of the dual heap. Thus, we completely get rid of the dual heap
for constructing the augmented position heap, replacing it with
suffix pointers. 

After $\PH{T}$ is constructed, we compute $mrp(i)$ iteratively for $i=1,2,\ldots,s-1$, where $s$ is
the active secondary position of $T[1..n]$. Assume we have computed
$mrp(i)$ for some $i$ and have to compute $mrp(i+1)$. Assume
$mrp(i)=\overline{T[i..i+\ell]}$. It is easily seen that
$T[i+1..i+\ell]$ is a prefix of the string represented by $mrp(i+1)$. To
compute $mrp(k+1)$, we follow the suffix link $f(mrp(k))$ to reach
$\overline{T[i+1..i+\ell]}$ and then keep extending the prefix
$T[i+1..i+\ell]$ as long as it is represented in $\PH{T}$. The
resulting pseudo-code is given in Algorithm~\ref{mrp-algo}. 
\begin{algorithm}
\begin{algorithmic}[1]
\STATE $\mathit{currentnode} \gets \mathit{root}$
\STATE $\mathit{readhead} \gets 1$
\FOR{$i = 1 \mathbf{~to~} n$}
   \WHILE {$\mathit{currentnode}$ has an outgoing
     $T[\mathit{readhead}]$-edge \AND $\mathit{readhead}\leq n$}
      \STATE move $\mathit{currentnode}$ to the target node of the
       outgoing $T[\mathit{readhead}]$-edge 
      \STATE $\mathit{readhead}\gets \mathit{readhead}+1$ \label{readhead-incr}
   \ENDWHILE
   \STATE $mrp(i)\gets \mathit{currentnode}$\label{mrp-set}
   \STATE $\mathit{currentnode} \gets f(\mathit{currentnode})$ \label{curnode-update}
\ENDFOR
\end{algorithmic}
\caption{Linear-time computation of maximal-reach pointers $mrp(i)$}\label{mrp-algo}
\end{algorithm}

It is easy to see that Algorithm~\ref{mrp-algo} works in time
$O(n)$: the $\mathbf{while}$-loop makes exactly $n$ iterations
overall, as each iteration increments the $\mathit{readhead}$
counter. 

The following property of Algorithm~\ref{mrp-algo} is useful to
observe: as soon as $\mathit{readhead}$ gets the
value $n+1$ (line~\ref{readhead-incr}), the node
$\mathit{currentnode}$ gets assigned to the active node of $\PH{T[1..n]}$
(line~\ref{curnode-update}); at the subsequent iterations, the
algorithm simply traverses the chain of suffix links and sets the
maximal-reach pointer for each secondary position to be the node
storing this position (lines~\ref{mrp-set}-\ref{curnode-update}). 

\newcommand{\rr}{\mathsf{rank}}
\newcommand{\sel}{\mathsf{select}}
Maximal-reach pointers constitute an additional data structure on top
of the tree structure of the position heap. However, it is interesting
to note that this structure can be represented compactly in $O(n)$
bits so that $mrp(i)$ can be computed in constant time. Here is how it
can be done. 

As observed earlier, $mrp(i)\geq mrp(i-1)-1$ for all
$i\in [2..n]$. Define $\delta_i=mrp(i)-mrp(i-1)+1$ and observe that $mrp(1)+\sum_{i=2}^{n}\delta_i=n$. Represent the
vector $(mrp(1),\delta_2,\ldots,\delta_{n})$ as a binary vector $\mathbb{B}_{mrp}$ by
representing all values in unary followed by a $0$. For example,
vector $(1,2,0,3,0,1,0)$ is then represented as $10110011100100$. Note that the
length of $\mathbb{B}_{mrp}$ is $2n$. To $\mathbb{B}_{mrp}$, we will be applying $\rr$ and $\sel$ operations. Recall that
for a binary vector $\mathbb{B}$, $\rr_1(\mathbb{B},i)$ returns the
number of $1$'s occurring in $\mathbb{B}[1..i]$, and
$\sel_1(\mathbb{B},\ell)$ returns the position of the $\ell$-th
occurrence of $1$ in $\mathbb{B}$ (counting from left). $\rr_0$ and
$\sel_0$ are defined similarly. It is known
that the
input binary vector of length $n$ can be pre-processed using $o(n)$ additional
memory bits, so that $\rr$
and $\sel$ queries can be answered in time $O(1)$
\cite{DBLP:conf/focs/Jacobson89,DBLP:conf/soda/ClarkM96}. 
Observe now that $mrp(i)=\rr_1(\sel_0(i))-i+1$. Therefore, $mrp(i)$
can be computed in constant time. We summarize this in the following
Lemma. 
\begin{lemma}
For the position heap $\PH{T}$ of any text $T[1..n]$, the
maximal-reach pointers $mrp(i)$, $i\in [1..n]$, can be 
stored in $2n+o(n)$ bits so that each $mrp(i)$ can be recovered in
time $O(1)$. 
\end{lemma}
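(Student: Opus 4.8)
The plan is to verify, in the order they are used by the construction preceding the statement, the three facts on which it rests: that the sequence being encoded is non-negative (so the unary code is well defined), that the resulting bit-vector $\mathbb{B}_{mrp}$ has length exactly $2n$, and that the stated $\rr$/$\sel$ formula recovers $mrp(i)$. The succinct rank/select machinery then supplies the $o(n)$ overhead and the $O(1)$ query time, and I would invoke it as a black box.

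First I would pin down non-negativity. Reading $mrp(i)$ as the length of the longest prefix of $T[i..n]$ represented in $\PH{T}$ (equivalently, the depth of the target node), the inequality $mrp(i)\ge mrp(i-1)-1$ recorded earlier gives $\delta_i=mrp(i)-mrp(i-1)+1\ge 0$ for every $i\in[2..n]$, while $mrp(1)\ge 0$ trivially. This is precisely what makes every entry of $(mrp(1),\delta_2,\dots,\delta_n)$ unary-encodable. That inequality is the one structural input, and it follows from the factorial property (Theorem~\ref{PH-properties}\textit{(ii)}): if $T[i-1..j]$ is represented then so is its suffix $T[i..j]$, which is a prefix of $T[i..n]$, so the maximal reach at $i$ can fall short of that at $i-1$ by at most one.

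Next I would count bits and confirm the query. There are $n$ encoded quantities ($mrp(1)$ together with $\delta_2,\dots,\delta_n$), each contributing exactly one terminating $0$, so $\mathbb{B}_{mrp}$ has exactly $n$ zeros; the number of ones equals the sum of the encoded values, which telescopes to
\begin{equation*}
mrp(1)+\sum_{i=2}^{n}\delta_i=mrp(1)+\sum_{i=2}^{n}\bigl(mrp(i)-mrp(i-1)+1\bigr)=mrp(n)+(n-1)=n,
\end{equation*}
using $mrp(n)=1$ (the only represented prefix of $T[n..n]$ is the single letter $T[n]$). Thus $|\mathbb{B}_{mrp}|=2n$. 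For recovery I would note that $\sel_0(i)$ is the position of the $0$ closing the $i$-th block, so $\rr_1(\sel_0(i))$ counts all ones in the first $i$ blocks, i.e. the partial sum $mrp(1)+\sum_{j=2}^{i}\delta_j$, which by the same telescoping equals $mrp(i)+i-1$; rearranging yields $mrp(i)=\rr_1(\sel_0(i))-i+1$. Finally I would cite the standard fact that a length-$N$ bit-vector can be preprocessed with $o(N)$ extra bits to answer $\rr$ and $\sel$ in $O(1)$ time \cite{DBLP:conf/focs/Jacobson89,DBLP:conf/soda/ClarkM96}; with $N=2n$ this gives total space $2n+o(n)$ bits and constant-time evaluation of the formula.

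I expect no genuine obstacle: the argument is routine succinct-data-structure bookkeeping. The only point requiring care is keeping the two telescoping computations consistent — the full sum (for the length $2n$) and the truncated sum (for the query formula) — and remembering that it is the non-negativity of the $\delta_i$, itself inherited from the factorial property, that licenses the unary encoding in the first place.
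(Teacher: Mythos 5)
Your proof is correct and follows essentially the same route as the paper: the unary-encoded difference vector $(mrp(1),\delta_2,\ldots,\delta_n)$ of total length $2n$, constant-time recovery via $mrp(i)=\rr_1(\sel_0(i))-i+1$, and rank/select with $o(n)$ overhead cited as a black box. You also usefully fill in two details the paper leaves implicit, namely that non-negativity of the $\delta_i$ comes from the factorial property and that the telescoped sum equals $n$ because $mrp(n)=1$.
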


\section{Concluding remarks}

We proposed a construction algorithm of a position heap of a
string, under a modified definition of position heap compared to
\cite{Ehrenfeucht2011100}. In contrast with the algorithm of
\cite{Ehrenfeucht2011100} that processes the sequence right-to-left, our
algorithm reads the string left-to-right and has the on-line
property. Drawing a parallel to suffix trees, our algorithm can be compared to Ukkonen's
on-line algorithm \cite{Ukkonen93}, while the
algorithm of \cite{Ehrenfeucht2011100} can be compared to Weiner's
algorithm \cite{Weiner73}. The similarity of our algorithm to
Ukkonen's algorithm goes beyond this parallel, as the execution of 
both algorithms (e.g. the way of traversing the tree under
construction, or updating the active node)
are clearly analogous. 

Position heap is a smaller data structure than suffix tree: it
contains exactly $n+1$ nodes whereas the suffix tree has $n$ leaves and
then up to $2n$ nodes. Still, the position heap allows for a
linear-time string matching. The position heap is a new data
structure and many questions about it are open. 

The $O(n)$ complexity bounds of both Algorithm~\ref{main-algo}
(Theorem \ref{conclusion}) and Algorithm~\ref{mrp-algo} are stated
for a constant-size alphabet, otherwise a correcting factor $\log |A|$
should be introduced, similarly to the suffix tree construction. 
One may ask if there exists a linear-time 
algorithm (not necessarily on-line) which constructs position heaps within
a time independent of the alphabet size, as Farach's algorithm does
for suffix trees \cite{DBLP:conf/focs/Farach97}. 

An interesting direction to study is whether the position heap can
be compacted. The theory of compact data structures has become a major
subfield of string processing, and has accumulated a number of
interesting and powerful techniques
\cite{DBLP:journals/csur/NavarroM07}. In this paper, we showed that
some components of the position heap can be effectively compacted,
however the compaction of its main part (the trie) is still to be
studied. 

It would be interesting to study combinatorial properties of position
heaps. In combinatorial terminology, a tree with $n$ nodes labeled by
distinct integers from $\{1,\ldots,n\}$ and such that the label of any
node is smaller than the label of any of its descendant is called an
{\em increasing tree}. It is known, for example, that there are $n!$
ordered binary increasing trees \cite{stanley1999enumerative}
(``ordered'' means that left and right children are distinguished),
which implies that there are $(n+1)!$ ``potential position heaps''
over binary alphabet. Obviously, there are only $2^n$ different
position heaps over the binary alphabet. It would be interesting to
establish combinatorial properties that distinguish arbitrary
increasing trees from position heaps. 

The authors of \cite{Ehrenfeucht2011100} proposed algorithms
for updating the position heap when the input string undergoes
modifications (character insertions/deletions). We believe that these
algorithms can be easily applied to our definition of position
heap. Recently, the authors of \cite{NakashimaEtAlSPIRE12} showed that
the position heap can be generalized to a set of strings stored
in a trie such that the construction and pattern matching remain linear-time. 
Other interesting applications of position heap are still to
be discovered. 

From a more practical perspective, it would be also interesting to exploit the
``adaptiveness'' of position heaps to substring frequencies, mentioned
in Section~\ref{sect-properties}. 

\bibliography{../biblio,../pat_match}

\end{document}